\tikzset{
	vertex/.style=
	{
		circle,
		minimum size = 3mm,
		draw=black
	}
}
\tikzset{
	knoten/.style=
	{
		circle,
		inner sep=1pt,
		draw=black
	}
}
\tikzset{
	description/.style=
	{
		draw = none
	}
}
\tikzset{
	dot/.style=
	{
		circle,
		minimum width = 1mm,
		draw=black,
		fill=white
	}
}
\tikzset{
	smalldot/.style=
	{
	circle,
		inner sep = 1pt,
		draw=black,
		fill=white
	}
}
\tikzset{
	root/.style=
	{
		circle,
		minimum width = 1mm,
		draw=black,
		fill=black!50
	}
}
\tikzset{->, >=stealth, auto, node distance=2.8cm, semithick}
\tikzset{triplearrow/.style={draw=black, color=black, thick, double distance=4pt, -, >=stealth}, thirdline/.style={draw=black!75, color=black!75, thick, -, >=stealth}}
\tikzstyle{abstract}=[rectangle, draw=black, rounded corners, fill=blue!40, drop shadow,
\tikzstyle{comment}=[rectangle, draw=black, rounded corners, fill=green, drop shadow,
\tikzstyle{myarrow}=[->, >=open triangle 90, thick]
\tikzstyle{line}=[-, thick]
\tikzstyle{poldelay}=[rectangle, draw=black, rounded corners, fill=green!60, text centered, anchor=north, text=black, text width=3cm]
\tikzstyle{newdelay}=[rectangle, draw=black, line width=1mm, rounded corners, fill=green!60, text centered, anchor=north, text=black, text width=3cm]
\tikzstyle{newdelaytree}=[rectangle, draw=black, line width=1mm, rounded corners, fill=blue!40, text centered, anchor=north, text=black, text width=3cm]
\tikzstyle{incdelay}=[rectangle, draw=black, rounded corners, fill=orange!80, text centered, anchor=north, text=black, text width=3cm]
\tikzstyle{nopt}=[rectangle, draw=black, rounded corners, fill=red!90, text centered, anchor=north, text=black, text width=3cm]
\tikzstyle{legend}=[rectangle, draw=black, rounded corners, text centered, anchor=north, text=black, text width=3.5cm]
\tikzstyle{newdelaytreelegend}=[rectangle, draw=black, rounded corners, fill=blue!40, text centered, anchor=north, text=black, text width=3cm]
\tikzstyle{newdelaylegend}=[rectangle, draw=black, line width=1mm, rounded corners, fill=white, text centered, anchor=north, text=black, text width=3cm]
\tikzstyle{subclass}=[->, >=open triangle 45] 
\newtheorem{definition}{Definition}[section]
\newtheorem{lemma}[definition]{Lemma}
\newtheorem{theorem}[definition]{Theorem}
\newtheorem{corollary}[definition]{Corollary}
\newcommand{\dhyp}[1]{\textit{#1}}
\newcommand{\Set}[1]{\{#1\}}
\newcommand{\suchthat}[0]{ \ : \ }
\newcommand{\abs}[1]{\left| #1 \right|}
\newcommand{\NPcomplete}[0]{\textbf{NP}-complete}
\begin{document}

\title{Simple Necessary Conditions for the Existence of a Hamiltonian Path with Applications to Cactus Graphs}

\author{Pascal Welke}
\institute{Informatik III, University of Bonn, Germany}

\authorrunning{Pascal Welke}
\titlerunning{Necessary Conditions for Hamiltonian Paths}

\maketitle

\begin{abstract}
We describe some necessary conditions for the existence of a Hamiltonian path in any graph (in other words, for a graph to be traceable).
These conditions result in a linear time algorithm to decide the Hamiltonian path problem for cactus graphs.
We apply this algorithm to several molecular databases to report the numbers of graphs that are traceable cactus graphs.
\end{abstract}

\section{Introduction}
A Hamiltonian path is a path in a graph $G$ that contains each vertex of $G$ exactly once.
The Hamiltonian path problem (i.e., does there exist a Hamiltonian path in a given graph $G$?) is a well studied \NPcomplete{} problem with various applications~\cite{garey1979computers}.
Several algorithms have been proposed to find a Hamiltonian path in a graph, or to decide that none exists.
For example, Held and Karp~\cite{held1962dynamic} give a $O(n^2 \cdot 2^n)$ algorithm to compute a Hamiltonian path.
Bj{\"o}rklund~\cite{bjorklund2014determinant} gives a $O(1.657^n)$ time algorithm to count the number of Hamiltonian paths in a graph, which can also be used to decide the Hamiltonian path problem.
Due to the exponential time complexity of those and other algorithms, it would be beneficial to derive simple, fast tests that can be run in advance to decide at least in some cases if there exists a Hamiltonian path, or not.

Many authors concentrated on sufficient conditions for a graph to be traceable (i.e., that it contains a Hamiltonian path).
E.g. Dirac~\cite{dirac1973note} gives a lower bound on the number of edges in a graph that implies the existence of a Hamiltonian path.
Also, there is a wide range of graph classes, where we know that a Hamiltonian path exists, e.g. complete graphs, cycles, paths, or graphs of the platonic solids.

We go a different way and consider situations which do not allow for a Hamiltonian path.
That is, we define easily verifiable properties of graphs that prove that a graph is not traceable.
To our knowledge, there is much less work in this direction, most notably by Chv{\'a}tal~\cite{chvatal1973edmonds}, that introduces weakly Hamiltonian graphs and derives necessary conditions for a graph to contain a Hamiltonian cycle.
However, the paper uses quite involved concepts and the verification of the conditions for a given graph is not straightforward.
Our conditions, on the other hand, can be checked in linear time and are easy to understand.
They are based on partitioning a graph $G$ into its biconnected components and deriving a tree structure from those objects.
In short, a Hamiltonian path in $G$ can only exist if this tree structure is a path.

We start by considering trees and continue by defining a tree structure using the biconnected components of an arbitrary graph to devise conditions in Lemmas~\ref{lem:criticality} and~\ref{lem:bics}.
As a direct application of our necessary conditions, we devise a linear time algorithm for cactus graphs in Theorem~\ref{thm:tracablecactus}. 
Finally, we give statistics of a molecular dataset that were obtained using our conditions.

\section{Nice Necessary Conditions}\label{sec:conditions}
From now on, we only consider connected graphs, as otherwise there cannot be a Hamiltonian path.
We start by considering the Hamiltonian Path problem for trees. 
It is easy to see, that a tree $T$ has a Hamiltonian Path if and only if $T$ is a path.
We use standard graph notation, see Diestels book~\cite{diestel2005graph} for definitions.

\begin{lemma}\label{lem:trees}
A tree $T$ has a Hamiltonian path if and only if $T$ is a path.
\end{lemma}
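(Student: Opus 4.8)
The plan is to prove both directions of the equivalence. The ``if'' direction is immediate: if $T$ is a path, then $T$ itself is a Hamiltonian path, since by definition a path visits each of its vertices exactly once. So the substance is in the ``only if'' direction, for which I would argue by contraposition: assuming $T$ is a tree that is \emph{not} a path, I would show $T$ has no Hamiltonian path.

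The key observation is that a tree fails to be a path precisely when it contains a vertex of degree at least $3$. (A connected acyclic graph all of whose vertices have degree at most $2$ is a path or a single edge or a single vertex; conversely a path has only degree-$1$ and degree-$2$ vertices.) So let $v$ be a vertex of $T$ with $\deg(v) \geq 3$, and suppose for contradiction that $P$ is a Hamiltonian path in $T$. Since $P$ is a subgraph of $T$ spanning all vertices, and in $P$ every vertex has degree at most $2$, the edges of $P$ incident to $v$ number at most $2$. Hence at least one edge $vw$ of $T$ is not used by $P$. The idea is then to derive a contradiction from the fact that $P$, being a spanning tree of the tree $T$, must actually equal $T$: a tree on $n$ vertices has exactly $n-1$ edges, $P$ has $n-1$ edges and is a subgraph of $T$ which also has $n-1$ edges, so $P = T$ as edge sets, contradicting that $vw \in E(T) \setminus E(P)$.

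Alternatively — and perhaps more cleanly for the write-up — I would skip the contradiction and simply note that a Hamiltonian path $P$ in $T$ is a spanning connected subgraph with $n-1$ edges, hence a spanning tree of $T$; since $T$ is itself a tree, its only spanning tree is $T$ itself, so $P = T$, and therefore $T$, being equal to a path, is a path. This handles ``only if'' directly without case analysis on degrees.

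I do not expect a serious obstacle here; the only thing to be careful about is the degenerate small cases (the one-vertex and two-vertex trees), which are trivially paths, and making explicit the standard fact that a spanning connected subgraph of an $n$-vertex graph with $n-1$ edges is a spanning tree, and that a tree has a unique spanning tree (itself). These are routine and can be cited to Diestel~\cite{diestel2005graph}.
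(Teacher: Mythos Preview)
Your proposal is correct, and your ``cleaner'' alternative is essentially identical to the paper's proof: the paper simply observes that a Hamiltonian path $P$ in $T$ has $\abs{V(T)}-1$ edges, which equals $\abs{E(T)}$, so $E(P)=E(T)$ and $T$ is a path. Your first route via a degree-$\geq 3$ vertex is an unnecessary detour (you end up invoking the same edge-counting argument anyway), so go with the second version.
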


\begin{proof}
``$\Leftarrow$'' is clear.
``$\Rightarrow$'' Let $T$ be a tree and $P$ a Hamiltonian path in $T$.
$P$ contains all vertices of $T$ and has thus $\abs{V(G)} - 1$ edges. 
Therefore, $E(T) = E(P)$ and thus $T$ is a path.
\end{proof}

We will show that a generalized version of this holds for a tree structure defined on the \emph{articulation} vertices of any graph $G$.
We need the following definition:

\begin{definition}
Let $G$ be a connected graph.
A vertex $v \in V(G)$ is called \dhyp{articulation} vertex if its removal disconnects $G$, 
i.e., the graph $G-v = (V', E')$ is disconnected, where $V' := V \setminus \Set{v}$ and $E' := \Set{e \in E \suchthat v \notin e}$.
The \dhyp{criticality} of $v$ is the number of connected components of $G-v$.
\end{definition}

In a tree, every vertex that is not a leaf is an articulation vertex.
We now prove the first necessary condition.
In the case of trees, it follows directly from Lemma~\ref{lem:trees}.

\begin{lemma}\label{lem:criticality}
Let $G$ be a traceable graph.
Then all vertices have criticality at most $2$.
\end{lemma}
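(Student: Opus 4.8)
The plan is to argue by contrapositive: assuming $G$ is traceable, fix a Hamiltonian path $P$ and show directly that no vertex can have criticality $3$ or more. The whole argument rests on comparing the components of $G-v$ with the components of $P-v$, so the first step is to pin down the structure of $P-v$.

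First I would observe that since $P$ is a path, deleting any single vertex $v$ leaves at most two connected components: if $v$ is an endpoint of $P$ we get one subpath (or the empty graph, if $P$ is a single vertex), and if $v$ is internal we get exactly two subpaths. Next, because $P$ is Hamiltonian it contains every vertex of $G$, so $V(P-v) = V(G-v)$ and $P-v$ is a spanning subgraph of $G-v$. A standard fact about subgraphs is that every connected component of $P-v$ is contained in a single connected component of $G-v$; hence the components of $P-v$ induce a partition of $V(G-v)$ that refines the partition into components of $G-v$.

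Putting these together: every vertex of $G-v$ lies in one of the at most two components of $P-v$, and each of those lies inside one component of $G-v$, so the vertices of $G-v$ are spread over at most two components of $G-v$. Since $G-v$ has no isolated empty components, it has at most two components, i.e. the criticality of $v$ is at most $2$. As $v$ was arbitrary, this proves the lemma.

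The argument is essentially elementary; the only point that needs a clean statement rather than a calculation is the claim that components of a spanning subgraph refine the components of the ambient graph, and that a path loses at most two components under single-vertex deletion. I expect the mild subtlety to be bookkeeping of degenerate cases (deleting an endpoint, or $P$ consisting of one vertex), which I would handle by phrasing the bound uniformly as ``at most two components'' rather than ``exactly two.''
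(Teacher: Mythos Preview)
Your proof is correct and rests on the same core observation as the paper's: deleting a vertex from a path leaves at most two pieces. The paper phrases this by contradiction---assuming three components $C_1,C_2,C_3$ of $G-v$ and tracing $P$ to show it must pass through $v$ at least twice---whereas you give the cleaner direct version via the spanning-subgraph component-refinement fact; the two arguments are essentially equivalent.
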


\begin{proof}
Suppose there is a vertex $v$ with criticality at least $3$.
Then $G-v$ has three nonempty connected components $C_1, C_2, C_3$.
Let $P$ be a Hamiltonian path of $G$ and $u_1$ (resp. $u_2, u_3$) be the first vertex in $V(C_1)$ (resp. $V(C_2), V(C_3)$) occurring in $P$ (w.l.o.g. in this order).
Any path connecting $u_1 \in V(C_1)$ to $u_2 \in V(C_2)$ in $G$ needs to contain $v$.
Otherwise, $u$ and $w$ would be contained in the same connected component of $G-v$.
The same is true for a path from $u_2$ to $u_3$.
Therefore, $P$ contains $v$ at least twice, which is a contradiction to $P$ being a path.
\end{proof}

Figure~\ref{fig:lem:bics} shows an illustration of the situation described in Lemma~\ref{lem:criticality}.
Vertex $v_2$ has criticality $3$ and therefore does not allow for a Hamiltonian path in the graph.
The next lemma focuses on biconnected components.

\begin{lemma}\label{lem:bics}
Let $G$ be a traceable graph.
Then each biconnected component of $G$ contains at most two articulation vertices.
\end{lemma}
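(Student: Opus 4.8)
The plan is to follow the pattern of Lemma~\ref{lem:criticality}: assuming that some biconnected component $B$ contains three articulation vertices $a_1, a_2, a_3$, I will exhibit three pairwise disjoint vertex sets each of which must contain an endpoint of a Hamiltonian path of $G$, which is impossible since a path has only two endpoints. The three sets are the ``branches'' of $G$ hanging off $B$: for each $i$, let $D_i$ be the union of all connected components of $G - a_i$ that contain no vertex of $B$ other than $a_i$. (If $\abs{V(B)} \le 2$ then $B$ trivially has at most two articulation vertices, so we may assume $\abs{V(B)} \ge 3$, whence $B$ is $2$-connected.) Then $V(B) \setminus \{a_i\}$ is nonempty and lies in a single component of $G - a_i$, and since $a_i$ is an articulation vertex there is at least one further component; hence $D_i \neq \emptyset$, and by construction $D_i \cap V(B) = \emptyset$.

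First I would check that $D_1, D_2, D_3$ are pairwise disjoint. Suppose $w \in D_i \cap D_j$ for $i \neq j$. From $w \in D_i$ it follows that every path from $w$ to $a_j$ in $G$ passes through $a_i$ --- otherwise that path would connect $w$ to $a_j \in V(B) \setminus \{a_i\}$ within $G - a_i$, contradicting the choice of $D_i$. By symmetry, every path from $w$ to $a_i$ passes through $a_j$. Now take any path from $w$ to $a_i$ (one exists since $G$ is connected) and truncate it at its unique occurrence of $a_j$; since $a_i$ occurs on this path only as its last vertex and $a_i \neq a_j$, the resulting subpath is a path from $w$ to $a_j$ that avoids $a_i$ --- a contradiction.

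Next I would analyse how a Hamiltonian path $P$ covers a fixed $D_i$. The only vertex outside $D_i$ adjacent to $D_i$ is $a_i$: the edges of $G$ absent from $G - a_i$ are exactly those incident to $a_i$, and $D_i$ is a union of components of $G - a_i$. Consequently, if $Q$ is a maximal subpath of $P$ all of whose vertices lie in $D_i$ and $Q$ contains neither endpoint of $P$, then the vertices immediately before and after $Q$ along $P$ both exist, both lie outside $D_i$ by maximality, and hence both equal $a_i$ --- impossible, since $a_i$ occurs only once on $P$. As $P$ visits every vertex of $G$, it visits $D_i$, so at least one such $Q$ exists, and it must contain an endpoint of $P$. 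Thus each of $D_1, D_2, D_3$ contains an endpoint of $P$, which contradicts their disjointness together with the fact that $P$ has only two endpoints; therefore $B$ contains at most two articulation vertices.

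I expect the disjointness of the branches to be the only genuinely delicate point: it is the step that relies on the separation properties of the articulation vertices rather than on the Hamiltonian path, and it is the easiest to botch by conflating ``component of $G - a_i$'' with ``component of $G - a_j$''. The covering argument of the third paragraph and the endpoint count that finishes the proof are direct analogues of reasoning already used in Lemma~\ref{lem:criticality}.
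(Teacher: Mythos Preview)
Your proof is correct and follows essentially the same approach as the paper: your branches $D_i$ are exactly the paper's $X_i$, the disjointness argument is the same separation idea (the paper phrases it as ``a path in $X_i$ from $x$ to a neighbour of $v_i$ survives the deletion of $v_j$''), and your endpoint-counting finish is a cleaner rephrasing of the paper's ``must visit some $v_i$ twice'' conclusion.
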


\begin{proof}
Suppose there is a biconnected component $B$ of $G$ that contains three articulation vertices $v_1, v_2, v_3$.
Removing $v_i \in \Set{v_1, v_2, v_3}$ from $G$ results in a disconnected graph $G_i := G - v_i$.
Now, there exists a connected component $B_i$ in $G_i$ such that $V(B_i) \cap V(B) = V(B) \setminus \Set{v_i}$ and $B_i$ is connected.
Let $X_i$ be the nonempty graph of all other connected components of $G-v_i$.
To see this, remember that $B$ is a biconnected component thus removing a single vertex does not disconnect $B$ and all vertices in $V(B) \setminus \Set{v_i}$ are contained in the same connected component of $G-v_i$.
However, as $v_i$ is an articulation vertex, $G - v_i$ is disconnected and thus $V(X_i) \neq \emptyset$.
As an example, Figure~\ref{fig:lem:bics} shows $B_i$ and $X_i$ for the case $v_i = v_1$.

\emph{Claim:} 
$V(X_i) \cap V(X_j) = \emptyset$ for all $i \neq j \in \Set{1,2,3}$.

Using this claim, we can prove the lemma.
A Hamiltonian path $P$ of $G$ needs to contain all vertices in $V(X_1), V(X_2), V(X_3)$.
But to get from any vertex in $V(X_i)$ to a vertex $x \in V(X_j)$, it needs to pass through $v_i$.
To get from $v_i$ to $x$, the path must pass through $v_j$, as $v_i \in V(B_j)$.
Using the same argument as in the proof of Lemma~\ref{lem:criticality}, we see that $P$ needs to visit one of the articulation vertices $v_1, v_2, v_3$ at least twice, which is a contradiction to $P$ being a path.

\emph{Proof of Claim:} 
Suppose there exists $x \in V(X_i) \cap V(X_j)$.
As $x \in V(X_i)$ there exists a path in $X_i$ connecting $x$ to a neighbor of $v_i$ in $G$.
Thus removing $x_j$ would not disconnect $x$ from $v_i \in V(B_j)$, which contradicts $x \in V(X_j)$. 
\end{proof}

\begin{figure}
\begin{center}

\newcommand\irregularcircle[2]{
  \pgfextra {\pgfmathsetmacro\len{(#1)+rand*(#2)}}
  +(0:\len pt)
  \foreach \a in {10,20,...,350}{
    \pgfextra {\pgfmathsetmacro\len{(#1)+rand*(#2)}}
    -- +(\a:\len pt)
  } -- cycle
}

\begin{tikzpicture}[scale=0.65]

\pgfmathsetseed{42}

\pgfmathsetmacro{\cyclesize}{2}
\pgfmathsetmacro{\innercutsize}{2.3}
\pgfmathsetmacro{\vdist}{1}
\pgfmathsetmacro{\turn}{0}
\pgfmathsetmacro{\half}{(abs(\cyclesize - \innercutsize)) / 2}
\pgfmathsetmacro{\mediumsize}{\cyclesize + \half}
\pgfmathsetmacro{\centerdist}{\cyclesize + \innercutsize}
\pgfmathsetmacro{\innercutmed}{\innercutsize + \half}

\coordinate (center) at (0,0);

\draw[rounded corners=1mm] (center) \irregularcircle{1.7cm}{1mm};
\node [knoten] (v1) at ($ (center) + ( 0    : \vdist cm) $) {$v_3$};
\node [knoten] (v2) at ($ (center) + ( 120  : \vdist cm) $) {$v_2$};
\node [knoten] (v3) at ($ (center) + ( 240  : \vdist cm) $) {$v_1$};
\node [description] (d1) at (center) {$B$};
\draw [-] (v1) edge (v2)
          (v2) edge (v3)
          (v3) edge (v1);

\coordinate (c1) at ($ (center) + (0 : \centerdist cm) $);
\draw[rounded corners=1mm] (c1) \irregularcircle{1.5cm}{1mm};
\node [smalldot] (v11) at ($ (c1) + (140 : 1cm) $) {};
\node [smalldot] (v13) at ($ (c1) + (220 : 1cm) $) {};
\draw [-] (v1) edge (v11)
		  (v1) edge (v13);
\node [description] (d1) at (c1) {$X_3$};

\node [smalldot] (v14) at ($ (v11) + (0 : 1.1cm) $) {};
\node [smalldot] (v15) at ($ (v13) + (0 : 1.1cm) $) {};
\node [smalldot] (v16) at ($ (v1)  + (0 : 4  cm) $) {};
\draw [-] 
          (v11) edge (v14)
          (v13) edge (v15)
          (v14) edge (v16)
          (v15) edge (v16);

\draw [-, dashed] ($ (center) + ( + 100 : \innercutsize cm) $) arc ( + 100 : + 140 : \innercutsize cm );
\draw [-, dashed] ($ (center) + ( + 140 : \innercutsize cm) $) arc ( + 280 : - 40  : \innercutsize cm );

\coordinate (c2) at ($ (center) + (120 : \centerdist cm) $);
\coordinate (c21) at ($ (c2) + (30 : 1cm) $);
\coordinate (c22) at ($ (c2) + (210 : 1.1cm) $);
\draw[rounded corners=1mm] (c21) \irregularcircle{1cm}{1mm};
\draw[rounded corners=1mm] (c22) \irregularcircle{1cm}{1mm};
\node [smalldot] (v21) at ($ (c21) + (-60 : 0.5 cm) $) {};
\node [smalldot] (v22) at ($ (c22) + (-60 : 0.5 cm) $) {};
\draw [-] (v2) edge (v21)
		  (v2) edge (v22);
\node [description] (d2) at ($ (c2) + (120 : \cyclesize cm) $) {$X_2$};

\node [smalldot] (v23) at ($ (v21) + (120 : 0.7cm) $) {};
\draw [-] (v21) edge (v23);

\draw [-] ($ (center) + ( + 220 : \innercutsize cm) $) arc ( + 220 : + 260 : \innercutsize cm );
\draw [-] ($ (center) + ( + 260 : \innercutsize cm) $) arc ( + 40 : - 280  : \innercutsize cm );

\coordinate (c3) at ($ (center) + (240 : \centerdist cm) $);
\draw[rounded corners=1mm] (c3) \irregularcircle{1.5cm}{1mm};
\node [smalldot] (v31) at ($ (c3) + (30 : 1cm) $) {};
\node [smalldot] (v32) at ($ (c3) + (120 : 1cm) $) {};
\draw [-] (v3) edge (v31)
		  (v3) edge (v32);
\node [description] (d3) at (c3) {$X_1$};

\draw [-] (v31) edge (v32);

\draw [-] ($ (center) + ( + 215 : \mediumsize cm) $) arc ( + 215 : + 265 : \mediumsize cm ) -- ($ (c1) + (270 : \innercutmed cm) $) arc (-90 : 60 : \innercutmed cm) -- ($ (c2) + (60 : \innercutmed cm) $) arc (60 : 210 : \innercutmed cm) -- cycle; 

\node [description] (d4) at ($ (center) + (60 : \centerdist) $) {$B_1$};

\end{tikzpicture} \end{center}
\caption{A cactus graph $G$ without a Hamiltonian path.
$v_2$ has criticality $3$ (Lemma~\ref{lem:criticality}) and the biconnected component $B$ contains three articulation vertices (Lemma~\ref{lem:bics}).}
\label{fig:lem:bics}
\end{figure}

Lemma~\ref{lem:criticality} and Lemma~\ref{lem:bics} together show that on any graph $G$, the existence of a Hamiltonian path implies a path-structure on the biconnected components of $G$.
More exactly, let $\mathcal{A}(G)$ be the set of articulation vertices of $G$ and $\mathcal{B}$ be the set of biconnected components of $G$.
We define a new graph $A(G) = (\mathcal{A}(G), E')$ where $E' = \Set{\Set{v,w} \suchthat \exists B \in \mathcal{B}:v, w \in V(B)}$.
Then $A(G)$ is a path.
Therefore, the Hamiltonian path problem reduces to checking if these two conditions hold and if there is a Hamiltonian path in each biconnected component, that
\begin{itemize}
\item starts at the first articulation vertex and ends at the second articulation vertex (if there are two)
\item starts at the articulation vertex (if there is one)
\item starts and ends at arbitrary vertices (if there is no articulation vertex in $G$).
\end{itemize}

We call biconnected components that contain exactly one articulation vertex \dhyp{leaf components} and finish this section with an easy corollary of the above considerations.

\begin{corollary}\label{corr:leaves}
Let $G$ be a traceable graph.
Then there are either zero or two leaf components.
\end{corollary}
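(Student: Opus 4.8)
The plan is to split on whether $G$ has any articulation vertex. If $\mathcal{A}(G) = \emptyset$, then $G$ is its own unique biconnected component, it contains no articulation vertex, and so it is not a leaf component; hence there are zero leaf components (this also disposes of the degenerate cases $G = K_1$ and $G = K_2$). The real work is the case $\mathcal{A}(G) \neq \emptyset$, where I will show there are exactly two leaf components by reading the answer off the path-structure on the biconnected components that is described after Lemma~\ref{lem:bics}.

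To do this cleanly I would make that path-structure precise as the \emph{block--cut tree} $T(G)$: the bipartite graph with node set $\mathcal{A}(G) \cup \mathcal{B}$ and an edge between $a \in \mathcal{A}(G)$ and $B \in \mathcal{B}$ exactly when $a \in V(B)$. I use two standard facts about $T(G)$: it is a tree, and the degree of a node $B \in \mathcal{B}$ is the number of articulation vertices contained in $B$, while the degree of a node $a \in \mathcal{A}(G)$ equals the number of biconnected components containing $a$, which is in turn the criticality of $a$. Given these, Lemma~\ref{lem:bics} bounds every block-degree by $2$ and Lemma~\ref{lem:criticality} bounds every articulation-vertex-degree by $2$, so the tree $T(G)$ has maximum degree $2$ and is therefore a path. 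Next I would observe that the leaves of $T(G)$ are exactly the leaf components: a block is a leaf of $T(G)$ iff it contains a single articulation vertex, which is the definition of a leaf component; and an articulation vertex is never a leaf, since removing it from $G$ leaves at least two components, so it lies in at least two blocks and has degree at least $2$.

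It then remains only to count the leaves of the path $T(G)$. Since $\mathcal{A}(G) \neq \emptyset$, some articulation vertex is a node of $T(G)$ of degree at least $2$, so $T(G)$ has at least three nodes; being a path, it has exactly two leaves, and both are blocks (a degree-$1$ node cannot be an articulation vertex). By the previous step these two leaves are leaf components, and they are distinct nodes of $T(G)$, hence distinct biconnected components, so $G$ has exactly two leaf components. The one mildly delicate point in all of this is the standard correspondence between the biconnected components through a vertex and the connected components of its deletion — this is what lets us both identify the leaves of $T(G)$ with the leaf components and invoke Lemma~\ref{lem:criticality} to bound the degrees — but it is a classical fact about block decompositions; everything after it is just the observation that a path has two ends. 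If one prefers to avoid introducing the block--cut tree, the same count can be carried out directly on the graph $A(G)$ from the text by tracking, for each articulation vertex, how many of its (by Lemma~\ref{lem:criticality}, at most two) blocks contain a second articulation vertex; this is slightly longer but uses only $A(G)$, Lemmas~\ref{lem:criticality} and~\ref{lem:bics}, and the fact that two biconnected components share at most one vertex.
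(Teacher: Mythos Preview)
Your argument is correct. The paper does not actually give a proof of this corollary; it simply states it as an ``easy corollary'' of the path-structure observation on $A(G)$ preceding it. Your formalization via the block--cut tree $T(G)$ is the natural way to make that observation precise, and the key step --- that Lemmas~\ref{lem:criticality} and~\ref{lem:bics} force $T(G)$ to be a path whose two endpoints must be block-nodes --- is exactly the content the paper is gesturing at. The only difference is cosmetic: the paper works with the graph $A(G)$ on articulation vertices alone, whereas you use the bipartite block--cut tree, which makes the identification ``leaf of $T(G)$ $=$ leaf component'' immediate; you already note how to translate the argument back to $A(G)$ if desired.
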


\section{The Hamiltonian Path Algorithm for Cactus Graphs}
The results of Section~\ref{sec:conditions} imply a polynomial time algorithm for the Hamiltonian path problem for cactus graphs.
A \dhyp{cactus graph} is a connected graph where every biconnected component is either a single edge or a simple cycle.
Figure~\ref{fig:cactus} shows a cactus graph and a graph that is no cactus.

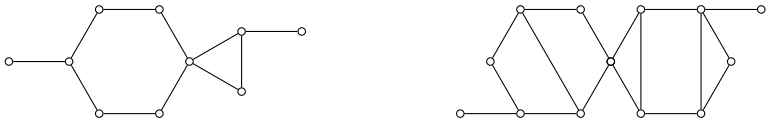
\begin{figure}
\begin{center}

\begin{tikzpicture}[scale=0.8]

\pgfmathsetmacro{\vdist}{1}

\coordinate (center) at (0,0);

\node [knoten] (a) at ($ (center) + (60 : \vdist cm) $) {};
\node [knoten] (b) at ($ (center) + (120 : \vdist cm) $) {};
\node [knoten] (c) at ($ (center) + (180 : \vdist cm) $) {};
\node [knoten] (d) at ($ (center) + (240 : \vdist cm) $) {};
\node [knoten] (e) at ($ (center) + (300 : \vdist cm) $) {};
\node [knoten] (f) at ($ (center) + (360 : \vdist cm) $) {};

\draw [-] (a) edge (b)
          (b) edge (c)
          (c) edge (d)
          (d) edge (e)
          (e) edge (f)
          (f) edge (a);

\node [knoten] (g) at ($ (f) + (30 : \vdist cm) $) {};
\node [knoten] (h) at ($ (f) + (-30 : \vdist cm) $) {};

\draw [-] (f) edge (g)
          (g) edge (h)
          (h) edge (f);

\node [knoten] (i) at ($ (c) + ( 180 : \vdist cm) $) {};
\node [knoten] (j) at ($ (g) + ( 0 : \vdist cm) $) {};
\draw [-] (c) edge (i)
          (g) edge (j);

\coordinate (center) at (7,0);

\node [knoten] (a) at ($ (center) + (60 : \vdist cm) $) {};
\node [knoten] (b) at ($ (center) + (120 : \vdist cm) $) {};
\node [knoten] (c) at ($ (center) + (180 : \vdist cm) $) {};
\node [knoten] (d) at ($ (center) + (240 : \vdist cm) $) {};
\node [knoten] (e) at ($ (center) + (300 : \vdist cm) $) {};
\node [knoten] (f) at ($ (center) + (360 : \vdist cm) $) {};

\draw [-] (a) edge (b)
          (b) edge (c)
          (c) edge (d)
          (d) edge (e)
          (e) edge (f)
          (f) edge (a)
          (e) edge (b);

\node [knoten] (g) at ($ (d) + (180:\vdist cm) $) {};
\draw [-] (g) edge (d);

\coordinate (newcenter) at ($ (f) + (0:\vdist cm) $);

\node [knoten] (a) at ($ (newcenter) + (60 : \vdist cm) $) {};
\node [knoten] (b) at ($ (newcenter) + (120 : \vdist cm) $) {};
\node [knoten] (c) at ($ (newcenter) + (180 : \vdist cm) $) {};
\node [knoten] (d) at ($ (newcenter) + (240 : \vdist cm) $) {};
\node [knoten] (e) at ($ (newcenter) + (300 : \vdist cm) $) {};
\node [knoten] (f) at ($ (newcenter) + (360 : \vdist cm) $) {};

\draw [-] (a) edge (b)
          (b) edge (c)
          (c) edge (d)
          (d) edge (e)
          (e) edge (f)
          (f) edge (a)
          (d) edge (b)
          (e) edge (a);

\node [knoten] (g) at ($ (a) + (0:\vdist cm) $) {};
\draw [-] (g) edge (a);

\end{tikzpicture} \end{center}
\caption{A cactus graph on the left and a graph that is not a cactus on the right.}
\label{fig:cactus}
\end{figure}

\begin{theorem}\label{thm:tracablecactus}
A cactus graph is traceable if and only if all of the following three conditions hold:
\begin{itemize}
\item Each vertex has criticality at most two
\item Each biconnected component contains at most two articulation vertices
\item If a biconnected component contains two articulation vertices, they share an edge.
\end{itemize}
\end{theorem}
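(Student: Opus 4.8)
The plan is to prove the two implications separately. The ``only if'' direction will be a quick deduction from Lemmas~\ref{lem:criticality} and~\ref{lem:bics} plus the elementary fact that a cycle admits a Hamiltonian path between two of its vertices only when they are adjacent on the cycle; the ``if'' direction will be an explicit construction gluing together Hamiltonian paths of the individual biconnected components.

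\emph{Necessity.} Let $G$ be a traceable cactus. Conditions~1 and~2 are exactly Lemmas~\ref{lem:criticality} and~\ref{lem:bics}. For condition~3, fix a biconnected component $B$ with two articulation vertices $v_1,v_2$ and a Hamiltonian path $P$ of $G$. I would first recall that two distinct biconnected components share at most one vertex, so $B$ is an induced subgraph of $G$, and then reuse the partition argument from the proof of Lemma~\ref{lem:bics}: letting $X_i$ be the union of the connected components of $G-v_i$ that avoid $V(B)\setminus\Set{v_i}$, both $X_1,X_2$ are nonempty, $V(G)$ is the disjoint union of $V(X_1)$, $V(X_2)$ and $V(B)$, and every edge of $G$ leaving $V(X_i)$ is incident to $v_i$. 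Since $v_1$ and $v_2$ each occur once in $P$, this forces $P$ to have the shape ``all of $X_1$, then $v_1$, then the remaining vertices, then $v_2$, then all of $X_2$''; the remaining vertices are precisely $V(B)\setminus\Set{v_1,v_2}$, so (using that $B$ is induced) the middle segment together with $v_1$ and $v_2$ is a Hamiltonian path of $B$ from $v_1$ to $v_2$. As $G$ is a cactus, $B$ is a single edge or a cycle: an edge gives $v_1v_2\in E(G)$ directly, and for a cycle the existence of such a Hamiltonian path forces $v_1$ and $v_2$ to be adjacent on it. Either way they share an edge.

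\emph{Sufficiency.} Now assume the three conditions. I would phrase the argument via the block-cut tree $T$ of $G$, whose nodes are the biconnected components and the articulation vertices, a block being adjacent to each articulation vertex it contains. The criticality of an articulation vertex equals its degree in $T$, and the number of articulation vertices of a block equals that block's degree in $T$, so conditions~1 and~2 say exactly that every node of $T$ has degree at most $2$; hence $T$ is a path. Since every leaf of a block-cut tree is a block, $T$ reads $B_1 - v_1 - B_2 - v_2 - \cdots - v_{m-1} - B_m$, with $B_1,B_m$ leaf components and each internal $B_i$ containing exactly the articulation vertices $v_{i-1},v_i$; the case $m=1$ (where $G$ is a single edge, a single cycle, or a single vertex) is immediate. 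By condition~3, $v_{i-1}$ and $v_i$ are adjacent inside each internal $B_i$, so $B_i$ (an edge or a cycle) has a Hamiltonian path from $v_{i-1}$ to $v_i$ -- for a cycle, delete the edge $v_{i-1}v_i$. Similarly $B_1$ has a Hamiltonian path ending at $v_1$ and $B_m$ one starting at $v_{m-1}$. Concatenating these $m$ paths at the shared vertices $v_1,\dots,v_{m-1}$ yields a walk in $G$; because the blocks pairwise meet only in these articulation vertices, each vertex of $G$ is used exactly once, so the walk is a Hamiltonian path and $G$ is traceable.

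The step I expect to cost the most effort is the one compressed into ``$P$ has the shape $\dots$'': turning the disjoint partition $V(G)=V(X_1)\sqcup V(X_2)\sqcup V(B)$ and the fact that $v_i$ is the only exit from $V(X_i)$ into the statement that the Hamiltonian path of $G$, restricted to a block carrying two articulation vertices, is forced to be one contiguous sub-path from one of them to the other. Everything after that is either a direct citation of the earlier lemmas or the routine remark that a cycle has a Hamiltonian path between two prescribed vertices precisely when they are adjacent -- which is exactly what condition~3 supplies -- together with the bookkeeping that the concatenated walk reuses no vertex.
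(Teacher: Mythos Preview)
Your proposal is correct and, for the necessity direction, essentially matches the paper: both invoke Lemmas~\ref{lem:criticality} and~\ref{lem:bics} and then argue that a Hamiltonian path, restricted to a block $B$ with two articulation vertices, must traverse $B$ as a single contiguous subpath from one articulation vertex to the other, forcing adjacency when $B$ is a cycle. Your version just spells out the partition $V(G)=V(X_1)\sqcup V(X_2)\sqcup V(B)$ more carefully than the paper does.

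The sufficiency direction is where you take a genuinely different route. You use the block--cut tree explicitly: conditions~1 and~2 force it to be a path $B_1-v_1-\cdots-v_{m-1}-B_m$, and you then manufacture a Hamiltonian path in each $B_i$ between the prescribed articulation vertices and concatenate. The paper instead argues globally: it deletes one well-chosen edge from every cycle (the edge $\{v_{i-1},v_i\}$ in an internal block, or an edge at the unique articulation vertex in a leaf block), and then verifies in one stroke that the resulting spanning subgraph is a path by checking that it is connected, acyclic, and has maximum degree two. The two constructions produce the same Hamiltonian path, but the paper's edge-deletion packaging sidesteps the explicit block--cut tree bookkeeping and the ``no vertex is reused'' check, at the cost of a small degree computation; your version is more structural and makes the role of the block--cut tree transparent.
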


\begin{proof}
Each cycle is traceable, and each Hamiltonian path of a cycle $C$ starts at an arbitrary vertex of $C$ and ends at one of its two neighbors.
Edges are also traceable.
``$\Rightarrow$''
If a cactus graph $G$ is traceable then, by Lemmas~\ref{lem:criticality} and~\ref{lem:bics} the first two conditions hold.
Let $B$ be a biconnected component of $G$ that contains two articulation vertices.
If $B$ is an edge, then the third condition holds trivially.
If $B$ is a cycle, then any Hamiltonian path must enter $B$ through one articulation vertex $v$, leave it through the other $w$ and can never enter $B$ again.
Therefore, the path from $v$ to $w$ must be a Hamiltonian path of $B$ and therefore contains all edges in $E(B)$ except one, which must be $\Set{v,w}$. 
``$\Leftarrow$''
We construct a Hamiltonian path as follows:
If $G$ is biconnected (i.e., it has no articulation vertices), we construct a Hamiltonian path by removing an arbitrary edge.
Otherwise, for each cycle, we remove the edge between the two articulation vertices or one of the edges incident to the unique articulation vertex in the cycle.
Note that by this, each articulation vertex has degree two in the resulting graph $P$.
As vertices with criticality $0$ have degree one or two in $G$, every vertex in $P$ has degree less than three.
We have removed exactly one edge from each cycle of $G$, thus $P$ contains no cycles and is still connected.
Therefore, $P$ is a path.
\end{proof}

We can check the conditions of Theorem~\ref{thm:tracablecactus} in linear time for a graph $G$ as follows:
First, we check if $G$ is connected by a simple breadth first search in linear time.
Next, we compute the biconnected components of $G$ in linear time using Tarjans algorithm~\cite{tarjan1972depth}.
Having the biconnected components (given as lists of edges), it is easy to compute the criticality of each vertex in $G$ by counting the number of biconnected components each vertex occurs in as an endpoint of at least one edge.
Having the criticality of each vertex, we can compute the number of critical vertices per biconnected components by a single pass over its edge list.
To check if $G$ is a cactus graph, we test if each biconnected component is either an edge or a simple cycle, which can also be done by a single pass over all edges in a biconnected component.
If there are exactly two, by another pass we can check if the component contains an edge that contains both critical vertices.
Therefore, the algorithm can be implemented to run in linear time with a small constant.

\section{Some Statistics for Molecular Datasets}
We implemented some variants of the proposed algorithm and applied them to three well studied molecular datasets.
\begin{description}
\item[NCI-HIV] consists of almost $43k$ compounds that are annotated with their activity against the human immunodeficiency virus (HIV) provided by the National Cancer institute~\cite{ncihiv}. 
We do not consider the annotations here, but merely use the molecular graph representations.
The median number of vertices and edges per graph are $41$ and $43$, respectively.
The database consumes $20.1MB$ in our textual file format.
\item[NCI-2012] is a larger set of molecular graphs from the same source~\cite{ncihiv}.
It consists of more than $250k$ graphs with median number of vertices and edges $36$ and $37$, respectively.
The file size is $100.3MB$.
\item[ZINC] is a subset of almost $9$ million so called 'Lead-Like' molecules from the zinc database of purchasable chemical compounds~\cite{zincll}.
The molecules were selected to have molar mass between $250g/mol$ and $350g/mol$ and have median number of vertices and edges $43$ and $44$, respectively.
The file size is roughly $3.8GB$.
\end{description}

Figure~\ref{tab:statistics} shows the number of graphs $N$, the number of connected cactus graphs $C$, the number of traceable cactus graphs $T$, as well as the number of (arbitrary) graphs $X$ that are definitively not traceable.
Furthermore, it reports the time $t_i$ needed by our implementation to compute value $i \in \Set{N,C,T,X}$.
The numbers were computed by parsing the database from a text file and checking property $i$ for each graph in the respective database.
Times were measured using the GNU \verb+time+ command summing up \verb+sys+ and \verb+user+ times.

All experiments were done on an Intel Core i7-2600K with 8GB main memory running Ubuntu 14.04 64bit.
The algorithms were implemented in C and compiled using \verb+gcc 4.8.2+ with optimization flag \verb+-O3+ enabled.
No multi-threading was used.
Furthermore, due to the fact that each graph can be processed separately, the maximum memory consumption at any time was less than $10MB$.

$t_N$ reports the time our implementation needs to parse the graph database, create graph objects in memory, and dump them again.
As you can see, the actual tests only add a small overhead in time compared to just parsing the data.
On the other hand, by checking if a graph (a) is connected and (b) fulfills our two necessary conditions, we can declare most of the graphs from all databases as non traceable.
For example, for the ZINC dataset, we would only need to further investigate $7$ out of almost $9$ million graphs to check if they are traceable, or not.

\begin{table}
\begin{center}
\begin{tabular}{ r | c c c}
      & NCI-HIV & NCI-2012 & ZINC \\
  \hline
  $N$ & 42687 & 249533 & 8946757 \\
  $C$ & 18028 & 134478 & 6517109 \\
  $X$ & 42658 & 249436 & 8946750 \\
  $T$ &     6 &     80 &       0 \\
  $t_N$ & 0.20 & 1.01 & 39.37 \\
  $t_C$ & 0.28 & 1.47 & 56.50 \\
  $t_X$ & 0.31 & 1.57 & 63.94 \\
  $t_T$ & 0.32 & 1.67 & 70.31 \\
\end{tabular}
\end{center}
\caption{Statistics for three molecular data sets. 
The reported times are in seconds}
\label{tab:statistics}
\end{table}

\section{Conclusion}
We have proposed two necessary conditions for a graph to be traceable that are easy and fast to check.
Using them, we proposed a linear time algorithm that decides if a cactus graph is traceable.
In more general practical settings, checking these conditions could be a first step, that might, in many cases make applying one of the exponential time exact algorithms obsolete.
We evaluated our tests effectiveness in that respect on three molecular data sets of varying size and showed that most molecular graphs can be easily identified as non-traceable using our conditions.

In future work, our proposed algorithm can be extended to yield an exact polynomial time algorithm for more general classes of graphs.
Using our conditions, we can reduce the Hamiltonian path problem in a non-biconnected graph $G$ to smaller Hamiltonian path problems in the biconnected components of $G$.
We would only need to check if there is a Hamiltonian path in each biconnected component that connects the two articulation vertices or starts at the unique articulation vertex, respectively.
This is possible in polynomial time if, for example, the number of spanning trees in each biconnected component is bounded by a polynomial in the size of $G$.

\bibliographystyle{plain}

\end{document}